\newtheorem*{proposition}{Proposition}
\let\OLDthebibliography\thebibliography
\renewcommand\thebibliography[1]{
\OLDthebibliography{#1}
\setlength{\parskip}{0pt}
\setlength{\itemsep}{0pt plus 0.32ex}}
\begin{document}



\title{Impact of Fading Correlation on the High-SNR Regime of Reconfigurable Intelligent Surfaces}


\author{
Paula Isabel Tilleria Lucero, Bryan Fernando Sarango Rodríguez, Fernando Darío Almeida García, \\ and José Cândido Silveira Santos Filho 
\thanks{This work was financed in part by the Coordenação de Aperfeiçoamento de Pessoal de Nível Superior - Brasil (CAPES) - Finance Code 001.
This work was partially funded by the Brasil 6G Project with support from RNP/MCTI (Grant 01245.010604/2020-14), and the xGMobile Project code XGM-FCRH-2024-10-16-1 with resources from EMBRAPII/MCTI (Grant 052/2023 PPI IoT/Manufatura 4.0 and FAPEMIG Grant PPE-00124-23).}
\thanks{P. I. T. Lucero, B. F. S. Rodríguez, and J. C. S. Santos Filho are with the Wireless Technology Laboratory, Department of Communications, School of Electrical and Computer Engineering, State University of Campinas (UNICAMP), Campinas, SP  13083-852, Brazil (e-mail: \mbox{p249999@dac.unicamp.br}; \mbox{b241852@dac.unicamp.br}; \mbox{candido@decom.fee.unicamp.br}).
F. D. A. García is with the Wireless and Artificial Intelligence (WAI) laboratory, National Institute of Telecommunications (INATEL), Santa Rita do Sapucaí, MG, 37540-000, Brazil (e-mail: \mbox{fernando.almeida@inatel.br}).}
}

\maketitle

\begin{abstract} 
This paper addresses three critical limitations in previous analyses of RIS-aided wireless systems: propagation environments with fixed diversity gain, restricted spatial correlation profiles, and approximation methods that fail to capture the system behavior in the high signal-to-noise ratio (SNR) regime.
To overcome these challenges, we conduct an exact asymptotic analysis focused on the left tail of the SNR distribution, which plays a critical role in high-SNR system performance. Additionally, to account for general correlation profiles and fading environments with variable diversity and coding gains, we consider arbitrarily correlated Nakagami-$m$ fading channels.
The analytical results show that fading correlation induces a horizontal shift in the asymptotic behavior---represented as a straight line in the log-dB scale---of the PDF and CDF, displacing these curves to the left. The asymptotic linear coefficient quantifies this shift, while the angular coefficient remains unaffected. Moreover, the results reveal that the high sensitivity of the linear coefficient to correlation arises from the aggregated contribution of all marginal asymptotic terms, effectively capturing each channel's correlation characteristics.
\end{abstract}
\begin{keywords}
Reconfigurable intelligent surfaces, Nakagami-$m$ fading, correlated channels, asymptotic analysis.
\end{keywords}

\section{Introduction}

Reconfigurable intelligent surfaces (RIS) offer a transformative paradigm to wireless communications by enhancing signal quality and network performance~\cite{Basar19}. 
However, RIS-assisted wireless networks still face key challenges, particularly in channel estimation, beamforming, sum-rate optimization, and channel modeling. In this work, we focus on the latter: channel modeling.
Specifically, we aim to derive the chief statistics of the system's signal-to-noise ratio (SNR), namely the probability density function (PDF) and cumulative distribution function (CDF).
These metrics are essential for evaluating system performance and guiding the design of RIS-enabled~networks.


Several studies have investigated RIS channel modeling and derived the PDF and CDF of the SNR under the assumptions of independent and identically distributed (i.i.d.) or independent non-identically distributed (i.n.i.d.) fading channels (cf.~\cite{Badiu20,Selimis21,Almeida2024,Trigui22,Amiriara24,Luna24,Agbogla23,Singh23} and references therein).
However, in practical deployments, mutual correlation among RIS elements---arising from their close physical spacing---significantly affects signal reception~\cite{Björnson2021,Wang23FD}. Therefore, accurate performance evaluation requires incorporating spatial correlation into the channel modeling process.

While independent fading scenarios have been extensively studied, only a few works have focused on channel modeling for RIS-aided systems under correlated fading conditions. Notable contributions include those in~\cite{Björnson2021, Trigui2024,Wang23FD, Papazafeiropoulos21,Chien21}.
In~\cite{Björnson2021}, spatial correlation due to the close proximity of RIS elements was modeled under Rayleigh fading, and the SNR statistics (PDF and CDF) were approximated in the large-RIS regime (i.e., a large number of RIS elements)  using the Central Limit Theorem (CLT).
In~\cite{Papazafeiropoulos21}, arbitrarily correlated Rayleigh fading was considered, and the SNR statistics were approximated in the large-RIS regime using deterministic equivalent (DE) analysis.
In~\cite{Wang23FD}, an exponentially decaying Nakagami-$m$ correlation was considered, while the SNR statistics were approximated by a Gamma distribution using moment-matching (MM) technique. Nonetheless, both the MM technique and the CLT often fail to accurately capture the tail behavior of the distribution. 
In~\cite{Chien21}, arbitrary spatial Rayleigh correlation was considered, and the SNR statistics were approximated using a Gamma distribution via the MM technique.
More recently,~\cite{Trigui2024} presented an exact and asymptotic analysis of SNR statistics under equally correlated Rician fading. The asymptotic expressions were derived using residue calculus, while the exact formulations were expressed in terms of nested series involving generalized multivariate Fox’s H-functions. Although thorough, the multivariate Fox’s H-function approach entails significant computational complexity as the number of RIS elements increases~\cite{Garcia22alpha_mu}.

Despite considerable efforts, several limitations remain in the analysis of RIS-assisted networks over correlated fading channels: (i) propagation environments restricted to
fixed diversity gain (e.g., Rician fading \cite{Trigui2024});
(ii) simplified correlation profiles (e.g., equally and exponentially decaying \cite{Trigui2024,Wang23FD}); and (iii) approximation methods that inadequately describe the tail behavior of the SNR distribution (e.g., MM and CLT approaches \cite{Wang23FD,Björnson2021,Chien21}).
In this paper, we address these limitations through an exact asymptotic analysis focused on the left tail of the distribution---a region critical for characterizing system behavior in the high-SNR regime.
Moreover, to extend the analysis to unrestricted correlation structures and propagation environments with variable coding gain, we consider arbitrarily correlated Nakagami-$m$ fading~channels.

In the sequel, the operator $\mathbb{E}(\cdot)$ denotes expectation; $\mathbb{V}(\cdot)$, variance; $\left[\cdot\right]^T$ indicates a transposed vector; $\left(\cdot\right)^*$, complex conjugate; $\mathbb{R}^N_+$, the set of all $N$-dimensional real vectors with non-negative components; $\mathbb{C}$, the set of complex numbers; $\Re\left\{\cdot\right\}$ and $\Im\left\{\cdot\right\}$ denote the real and imaginary parts, respectively, of a complex number; $\mathbbm{i}=\sqrt{-1}$ is the imaginary unit;  $\text{diag} \left( \cdot \right)$, returns the diagonal elements of a square matrix; $I_{\nu}\left(\cdot\right)$, the modified Bessel function of the first kind of $\nu$th order; $K_{\nu}(\cdot)$, the modified Bessel function of the second kind of $\nu$th order; $\Gamma(\cdot)$, the gamma function; ${}_{2}\tilde{F}_{1} (\cdot,\cdot,\cdot,\cdot)$, the regularized hypergeometric function; $\circ$ and $\oslash$, the Hadamard (element-wise) product and division, respectively; $\Pi (\mathbf{A})$, the product of all elements of the vector $\mathbf{A}$;  $\arg\max_{a \in \mathcal{A}} g_a$ returns the argument $a$ at which $g_a$ attains its maximum; and $\sim$, ``asymptotically equal to around zero, '' i.e., ${h(x)\sim g(x) \iff \lim\limits_{x\to 0} \, \frac{h (x)}{g (x)}=1}$.




\section{System and Channel Modeling}
\label{sec:SystemModel}
In this section, we briefly review the definition and construction of correlated Nakagami-$m$ random variables (RVs). Then, we introduce the adopted RIS-assisted wireless system.




\subsection{Correlated Nakagami-$m$ Random Variables}
\label{sec:CorrelatedNakagami}
The correlation among Nakagami-$m$ RVs is induced by their underlying Gaussian components.
Let $\left\{ G_{n,l}\right\}_{n=1,l=1}^{N,m}$ denote a set of complex Gaussian RVs constructed as in \cite[eq. (7)]{Beaulieu2011}:
\par\nobreak\vspace{-\abovedisplayskip}
\small
\begin{align}
    G_{n,l} &= \sigma_n \left( \sqrt{1 - \lambda_n^2} X_{n,l} + \lambda_n X_{0,l} \right) \nonumber \\
    & + \mathbbm{i} \sigma_n \left( \sqrt{1 - \lambda_n^2} Y_{n,l} + \lambda_n Y_{0,l} \right),
\end{align}
\normalsize
where $X_{0,l}, Y_{0,l}$, $X_{n,l}$, and $Y_{n,l}$  ($ n = 1,\hdots,N $ and $ l = 1,\hdots,m $) are independent Gaussian RVs with zero mean and variance $1/2$.
Then, for any $ n, k \in \{1,\dots,N\} $, and $ l, j \in \{1,\dots,m\} $, we have $\mathbb{E}(X_{n,l} Y_{k,j}) = 0$, and $\mathbb{E}(X_{n,l} X_{k,j}) =\mathbb{E}(Y_{n,l} Y_{k,j}) =\frac{1}{2} \varrho_{n,k} \varrho_{l,j} $, where $\varrho_{a,b}$ is the Kronecker delta function~\cite[eq. (04.20.02.0001.01)]{wolframFunctions}.
The cross-correlation coefficient between any pair of complex Gaussian RVs $G_{n,l}$ and $G_{k,j}$, $ \forall n \ne k$, is given by \cite[eq. (8)]{Beaulieu2011}
\par\nobreak\vspace{-\abovedisplayskip}
\small
\begin{align}
    \bar{\rho}_{(n,l),(k,j)} &= \frac{\mathbb{E}(G_{n,l} G_{k,j}^*) - \mathbb{E}(G_{n,l}) \mathbb{E}(G_{k,j}^*)} {\sqrt{ \mathbb{E}(|G_{n,l}|^2) \mathbb{E}(|G_{k,j}|^2) }} \nonumber \\ 
    & = \begin{cases} 
        1, & n = k \text{ and } l = j \\
        \lambda_n \lambda_k, & n \neq k \text{ and } l = j \\
        0, & l \neq j. 
    \end{cases}
\end{align}
\normalsize
Moreover, $\lambda_n \in [-1,1] \backslash \left\{0\right\} $ is a correlation parameter that governs the statistical dependence between any $G_{n,l}$ and $G_{k,j}$, and $\sigma_n^2 = \mathbb{E} (\Re\left\{G_{n,l}\right\}^2) = \mathbb{E} (\Im\left\{G_{n,l}\right\}^2)$.
Each resulting RV $H_n = \sqrt{\sum_{l=1}^{m} \left|{G_{n,l}}\right|^2}$
follows a Nakagami-$m$ distribution with shape \mbox{parameter} $m$ and mean square value  $\mathbb{E}(H_{n}^2) = m \sigma_{n}^2$.
The cross-correlation coefficient between any pair of Nakagami-$m$ RVs $H_n$ and $H_k$ is given by \cite[eq. (10)]{Beaulieu2011}
\par\nobreak\vspace{-\abovedisplayskip}
\small
\begin{align}
    \rho_{n,k} &= \frac{\mathbb{E}(H_n H_k^*) - \mathbb{E}(H_n) \mathbb{E}(H_k^*)} {\sqrt{ \mathbb{V}(H_n) \mathbb{V}(H_k) }} = \lambda_n^2 \lambda_k^2.
\end{align}
\normalsize


Let $\mathbf{H} \triangleq [H_1, H_2, ..., H_N]^T$ denote a random vector consisting of correlated Nakagami-$m$ RVs.
The joint PDF of $\mathbf{H}$ is given by \cite[eq. (20)]{Beaulieu2011}
\par\nobreak\vspace{-\abovedisplayskip}
\small
\begin{align}
    f_{\mathbf{H}}(\mathbf{h}) &= \int_0^{\infty} \frac{t^{m-1}}{\Gamma(m)} \exp(-t) \prod_{n=1}^{N} \frac{1}{(\sigma_{n}^2\lambda_{n}^2 t)^{\frac{m-1}{2}}} \frac{h_{n}^m}{\delta_{n}^2} \nonumber \\
    & \times \exp\left(-\frac{h_{n}^2+\sigma_{n}^2 \lambda_{n}^2 t}{2 \, \delta_{n}^2}\right)  I_{m-1}\left(\frac{h_{n}\sqrt{\sigma_{n}^2\lambda_{n}^2 t}}{\delta_{n}^2}\right) \text{d}t,
    \label{eq:JointCorreNakaPDF}
\end{align}
\normalsize
where $\mathbf{h} = [h_1, h_2, \hdots, h_N]^T$,  $\delta_{n}^2 = \sigma_{n}^2\left(\frac{1-\lambda_{n}^2}{2}\right)$, and  $m \geq 0.5$ denotes (as before) the fading shape parameter.



\subsection{RIS-aided Wireless Communications}
\label{sec:RISNakagami}

Consider a wireless network in which a RIS comprising $N$ elements assists communication between a single-antenna user transmitter (UT) and a user receiver (UR). The UT--RIS and RIS--UR channels experience correlated Nakagami-$m$ fading, while the direct UT--UR link is obstructed by large obstacles.

Let $\tilde{\mathbf{H}}_{1} \triangleq [ \tilde{H}_{1,1}, \dots, \tilde{H}_{1,N} ]^T \in \mathbb{C}^{N \times 1}$ and $\tilde{\mathbf{H}}_2 \triangleq  [ \tilde{H}_{2,1}, \dots, \tilde{H}_{2,N} ]^T \in \mathbb{C}^{N \times 1}$ denote the complex random vectors containing the channel coefficients between the UT and the $n$-th RIS element, and between the $n$-th RIS element and the UR, respectively.
Each channel coefficient is expressed as $\tilde{H}_{v,n}= H_{v,n} e^{\mathbbm{i} \phi_{v,n}}$ ($v \in {1,2}$), where $H_{v,n}$ and $\phi_{v,n}$ denote the envelope and phase components, respectively.
Moreover, define $\mathbf{H}_1 \triangleq [H_{1,n}, \hdots,H_{1,N}]^T$ and $\mathbf{H}_2 \triangleq [H_{2,n}, \hdots,H_{2,N}]^T$.
In this paper, we assume that $\mathbf{H}_1$ and $\mathbf{H}_2$ are mutually independent random vectors, each consisting of $N$ correlated Nakagami-$m$ RVS with joint PDF given by~\eqref{eq:JointCorreNakaPDF}.
Additionally, we consider that the received signal at the UR is influenced by the RIS reflection matrix, which controls the phase shifts applied to the incident signal. The reflection matrix is defined as $\Theta = \operatorname{diag} \left( e^{\mathbbm{i} \theta_1}, \hdots,  e^{\mathbbm{i} \theta_N} \right)$, where $\theta_n \in [0, 2\pi]$ denotes the phase shift introduced by the $n$-th RIS element.
Lastly, we consider that the RIS applies optimal phase shifts to maximize the SNR.
Under these considerations, the instantaneous SNR at the UR is expressed as
\par\nobreak\vspace{-\abovedisplayskip}
\small
\begin{align}
    \gamma = \bar{\gamma}\left|\sum_{n=1}^{N}H_{1,n} H_{2,n}\right|^2,
    \label{eq:SNR}
\end{align}
\normalsize
where $\bar{\gamma}$ represents the average SNR per symbol.

Eq.~\eqref{eq:SNR} provides the basis for performance analysis of the RIS-assisted system under arbitrarily correlated Nakagami-$m$ fading. However, it is worth noting that deriving the statistical characterization of the SNR statistics (PDF and CDF) is a highly challenging task. To date, no tractable expressions have been obtained for the PDF or CDF of~\eqref{eq:SNR}.
To address this analytical challenge, we adopt an asymptotic approach focused on the left tail of the PDF of $\gamma$, near the origin.
As shown next, the asymptotic analysis yields simplified expressions for the SNR statistics, enabling efficient performance evaluation and design insights at high SNR.





\section{Asymptotic SNR Analysis}
\label{sec:AsymptoticAnalysis}



In this section, we perform an asymptotic analysis aiming at deriving asymptotic expressions for the PDF and CDF of~$\gamma$.

Let the sum in \eqref{eq:SNR} be denoted as
\par\nobreak\vspace{-\abovedisplayskip}
\small
\begin{align}
    S \triangleq \sum_{n=1}^{N}H_{1,n} H_{2,n}.
    \label{eq:RisModel}
\end{align}
\normalsize



Moreover, let the PDF of $S$ be expressed via its Maclaurin series expansion as $f_S (s) = \sum_{n=0}^{\infty} a_n s^{b_n}$,
where $a_n$ and $b_n$ are the expansion coefficients.
According to \cite[eq. (3)]{Parente2019}, the PDF of $S$ admits the asymptotic representation
\par\nobreak\vspace{-\abovedisplayskip}
\small
\begin{align}
    f_S(s) \sim a_0 s^{b_0},
    \label{eq:AsympSum}
\end{align}
\normalsize
where $a_0$ and $b_0$ are the asymptotic linear and angular coefficients (in a log-scale visualization) of the PDF of $S$, respectively.\footnote{The linear and angular coefficients are crucial in asymptotic analysis as they influence the coding and diversity gains of diverse performance metrics, such as the outage probability or average symbol error probability~\cite{Parente2019}.}
The asymptotic representation of the PDF of $\gamma$ can be readily derived from \eqref{eq:AsympSum} via a standard change of variables, $\gamma = \bar{\gamma}|S|^2$, yielding 
\par\nobreak\vspace{-\abovedisplayskip}
\small
\begin{align}
    f_{\gamma}(\gamma) \sim a_{\gamma,0} \gamma^{b_{\gamma,0}},
    \label{eq:PDFSNR}
\end{align}
\normalsize
where  $a_{\gamma,0}= a_0 / \left(2 \, \bar{\gamma}^{\frac{b_0 + 1}{2}}\right)$ and $b_{\gamma,0}= (b_{0}-1)/2$ denote the linear and angular coefficients the PDF of $\gamma$, respectively.
The CDF of $\gamma$ can be obtained by integrating \eqref{eq:PDFSNR}  from zero to $\gamma$, i.e., $\int_{0}^{\gamma} f_{\gamma}(u) \text{d}u$, resulting in
\par\nobreak\vspace{-\abovedisplayskip}
\small
\begin{align}
     F_{\gamma}(\gamma) \sim \frac{a_{\gamma,0}}{1 + b_{\gamma,0}} \gamma^{1 + b_{\gamma,0}}.
        \label{eq:CDFSNR}
\end{align}
\normalsize
Accordingly, our main goal is to determine the linear and angular coefficients of the PDF of $S$, namely $a_0$ and $b_0$.  
This is attained in the next proposition.



\begin{proposition}
Given the asymptotic representation of the PDF of $S$ in \eqref{eq:AsympSum}, their corresponding linear and angular coefficients are given, respectively, by
\par\nobreak\vspace{-\abovedisplayskip}
\small
\begin{subequations}
\begin{align}
    a_{0} = &
    \frac{\Phi \, \Gamma(2 \, m_{\text{<}})^N  }
    { \Gamma(2 N m_{\text{<}})} \sum_{i_{1},\dots,i_{N}=0}^{\infty} \frac{\Gamma\left( m_{\text{>}} + \sum_{l=1}^{N} i_l \right)}{\left(2 + \sum_{l=1}^{N} \frac{\lambda_{r,l}^2 \sigma_{r,l}^2}{\delta_{r,l}^2} \right)^{\sum_{l=1}^{N} i_l}} \nonumber \\
    & \times   \prod_{n=1}^{N} \frac{\Gamma(i_n + |m_{\text{--}}|)  } {\Gamma(i_n + m_{\text{>}}) i_n!} \left( \frac{\sigma_{r,n} \lambda_{r,n}}{\delta_{r,n}} \right)^{2 i_n} 
    \label{eq:a0} \\
    b_{0} = & 2 N m_{\text{<}}- 1,
    \label{eq:b0}
\end{align}
\label{eq:asymTerms}
\end{subequations}
\normalsize
where $m_{\text{+}} \triangleq m_1+m_2$,  $m_{\text{--}} \triangleq  m_1-m_2$, $m_{\text{<}} \triangleq \min (m_1,m_2)$, $m_{\text{>}} \triangleq \max (m_1,m_2)$, $r \triangleq \arg\max_{v \in \{ 1, 2 \}} m_v$, and 
\par\nobreak\vspace{-\abovedisplayskip}
\small
\begin{align}
    \label{eq: phi}
    \Phi = \frac{\frac{2^{N(1 - 2 \, m_{\text{<}})} }{\Gamma(m_{\text{>}}) \, \Gamma(m_{\text{<}})^N }\left( \prod_{n=1}^{N} \frac{1}{\delta_{1,n}^2  \delta_{2,n}^2} \right)^{m_{\text{<}}}}{ \left( 1 + \sum_{l=1}^{N} \frac{\lambda_{1,l}^2 \sigma_{1,l}^2}{2 \delta_{1,l}^2} \right)^{m_1} \left( 1 + \sum_{l=1}^{N} \frac{\lambda_{2,l}^2 \sigma_{2,l}^2}{2 \delta_{2,l}^2} \right)^{m_2}}.
\end{align}
\normalsize

\end{proposition}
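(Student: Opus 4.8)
The plan is to exploit the conditionally independent (mixture) structure hidden in the joint density \eqref{eq:JointCorreNakaPDF}, reduce the computation of the leading Maclaurin term to a classical convolution of power functions, and finally average over the two mixing variables. \emph{Step 1 (mixture representation).} I would first rewrite \eqref{eq:JointCorreNakaPDF} as $f_{\mathbf H}(\mathbf h)=\int_0^\infty g_m(t)\prod_{n=1}^N f_{H_n\mid t}(h_n)\,\mathrm dt$, where $g_m(t)=t^{m-1}e^{-t}/\Gamma(m)$ is a unit-scale Gamma density and $f_{H_n\mid t}$ is the $n$-th bracketed factor; this is exactly the statement that, conditioned on the common Gamma-distributed aggregate $t$ of the shared Gaussian components of \cite[eq. (7)]{Beaulieu2011}, the envelopes $H_1,\dots,H_N$ are mutually independent, each of a conditional (Nakagami--Rician) type. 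Applying this separately to $\mathbf H_1$ (mixing variable $t_1$, shape $m_1$) and $\mathbf H_2$ (mixing variable $t_2$, shape $m_2$), and using that $\mathbf H_1$ and $\mathbf H_2$ are mutually independent, the products $P_n\triangleq H_{1,n}H_{2,n}$ become mutually independent conditioned on $(t_1,t_2)$, so that $f_S(s)=\int_0^\infty\!\int_0^\infty g_{m_1}(t_1)g_{m_2}(t_2)\bigl(f_{P_1\mid t_1,t_2}\ast\cdots\ast f_{P_N\mid t_1,t_2}\bigr)(s)\,\mathrm dt_1\,\mathrm dt_2$.

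\emph{Step 2 (conditional leading term).} Next I would use the series $I_\nu(z)=\sum_{i\ge0}(z/2)^{2i+\nu}/[\,i!\,\Gamma(i+\nu+1)\,]$ to obtain, term by term, $f_{H_{v,n}\mid t_v}(h)\sim A_{v,n}(t_v)\,h^{2m_v-1}$ as $h\to0$, with $A_{v,n}(t_v)=\exp\!\bigl(-\sigma_{v,n}^2\lambda_{v,n}^2 t_v/(2\delta_{v,n}^2)\bigr)\big/\bigl[\Gamma(m_v)\,2^{m_v-1}\,\delta_{v,n}^{2m_v}\bigr]$. For a product $XY$ of independent nonnegative RVs with densities $\sim a x^{\alpha}$ and $\sim b y^{\beta}$ near the origin and $\alpha<\beta$, one has $f_{XY}(p)\sim a\,p^{\alpha}\,\mathbb E[Y^{-\alpha-1}]$. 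Taking $r\triangleq\arg\max_{v}m_v$ and letting $r'$ be the complementary index (so the smaller exponent is $2m_{\text{<}}-1$), this gives $f_{P_n\mid t_1,t_2}(p)\sim c_n(t_1,t_2)\,p^{2m_{\text{<}}-1}$ with $c_n(t_1,t_2)=A_{r',n}(t_{r'})\,\mathbb E\!\bigl[H_{r,n}^{-2m_{\text{<}}}\mid t_r\bigr]$. The assumption $m_1\ne m_2$ is what makes this conditional negative moment finite and prevents a logarithmic correction, and it is also the source of the factor $\Gamma(i_n+|m_{\text{--}}|)$ in \eqref{eq:a0}, since $|m_{\text{--}}|=m_{\text{>}}-m_{\text{<}}>0$.

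\emph{Step 3 (convolve and average).} Applying the independent-sum asymptotic result of \cite[eq. (3)]{Parente2019} conditionally on $(t_1,t_2)$, the $N$-fold convolution of densities asymptotic to $c_n s^{2m_{\text{<}}-1}$ is asymptotic to $\bigl(\prod_{n=1}^N c_n\bigr)\,\Gamma(2m_{\text{<}})^N\Gamma(2Nm_{\text{<}})^{-1}\,s^{2Nm_{\text{<}}-1}$ (a standard iterated Beta integral), which already yields \eqref{eq:b0}. Passing this through the $(t_1,t_2)$-integration by dominated convergence, $a_0=\Gamma(2m_{\text{<}})^N\Gamma(2Nm_{\text{<}})^{-1}\,\mathbb E_{t_{r'}}\!\bigl[\prod_{n}A_{r',n}(t_{r'})\bigr]\,\mathbb E_{t_r}\!\bigl[\prod_{n}\mathbb E[H_{r,n}^{-2m_{\text{<}}}\mid t_r]\bigr]$. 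The first expectation is a single Gamma integral over $t_{r'}\sim g_{m_{\text{<}}}$ which, with the $A_{r',n}$ constants, produces the $m_{r'}=m_{\text{<}}$ portion of $\Phi$ in \eqref{eq: phi}. For the second, I would expand $I_{m_{\text{>}}-1}$ as a power series, integrate term by term to write $\mathbb E[H_{r,n}^{-2m_{\text{<}}}\mid t_r]$ as a series in $t_r$ with coefficients containing $\Gamma(i_n+|m_{\text{--}}|)/[\Gamma(i_n+m_{\text{>}})\,i_n!]$ and $(\sigma_{r,n}\lambda_{r,n}/\delta_{r,n})^{2i_n}$, multiply over $n$, and average over $t_r\sim g_{m_{\text{>}}}$; this produces the factor $\Gamma(m_{\text{>}}+\sum_l i_l)\big/\bigl(2+\sum_l\lambda_{r,l}^2\sigma_{r,l}^2/\delta_{r,l}^2\bigr)^{\sum_l i_l}$ and the $m_r=m_{\text{>}}$ portion of $\Phi$. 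Collecting all constants then reproduces \eqref{eq:asymTerms}.

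The main obstacle is the analytic justification rather than the algebra. First, one must establish the product-of-power-RVs asymptotic together with the finiteness of $\mathbb E[H_{r,n}^{-2m_{\text{<}}}\mid t_r]$, which is precisely what forces $m_1\ne m_2$ and rules out a logarithmic term. Second, one must justify the interchange of the $s\to0$ limit with the double integral over $(t_1,t_2)$ — a uniform-integrability/dominated-convergence estimate on the normalized conditional convolution $f_{S\mid t_1,t_2}(s)/s^{2Nm_{\text{<}}-1}$ — as well as the exchange of the Bessel series with the integrations (by Tonelli, since all terms are nonnegative over the relevant parameter ranges). The remaining bookkeeping of the constants is lengthy but routine.
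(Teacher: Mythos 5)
Your proof is correct in substance and reaches the stated coefficients, but it follows a genuinely different route from the paper's. The paper works unconditionally: it forms the joint PDF of $\mathbf{Z}=\mathbf{H}_1\circ\mathbf{H}_2$ by an $N$-dimensional Mellin convolution of the two joint densities \eqref{eq:JointCorreNakaPDF}, expands the Bessel-$I$ factors in series, integrates term by term to obtain \eqref{eq:PDFProduct} in terms of Bessel-$K$ functions, extracts the small-argument behavior of each $K_{m_{\text{--}}+i_n-j_n}$ (which kills the $j$-indices and leaves the single $N$-fold series over $i_1,\dots,i_N$), observes that the resulting asymptote factorizes as in \cite[eq.~(5)]{Parente2019}, and only then invokes \cite[eq.~(4)]{Parente2019} for the sum. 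You instead condition on the two Gamma mixing variables $(t_1,t_2)$ so that the per-element products $P_n$ become genuinely (not merely asymptotically) independent, apply the product-of-variables asymptotic $f_{XY}(p)\sim a\,p^{\alpha}\,\mathbb{E}[Y^{-\alpha-1}]$ and the iterated-Beta convolution conditionally, and marginalize last. The two routes share the same ingredients: your term-by-term expansion of $\mathbb{E}[H_{r,n}^{-2m_{\text{<}}}\mid t_r]$ via the $I_{m_{\text{>}}-1}$ series plays exactly the role of the paper's $K$-function expansion and is the common source of the surviving $N$-fold sum, while the Gamma average over $t_r$ produces the coupling factor $\Gamma\left(m_{\text{>}}+\sum_{l}i_l\right)$, so the constant bookkeeping is equivalent. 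What your version buys is twofold: the paper's ``asymptotic independence'' inference (applying an independent-sum result to correlated $Z_n$ because their joint asymptote happens to factorize) is replaced by exact conditional independence plus a final limit--integral interchange, which you correctly flag as the step requiring a dominated-convergence estimate; and you make explicit the implicit restriction $m_1\neq m_2$ (otherwise $\Gamma(i_n+|m_{\text{--}}|)$ diverges at $i_n=0$ and a logarithmic correction appears), which the paper leaves unstated.
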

\begin{proof}
Please, see the Appendix. 
\end{proof}


Leveraging \eqref{eq:a0} and \eqref{eq:b0}, the linear and angular coefficients in \eqref{eq:PDFSNR} can finally be expressed, respectively, as
\par\nobreak\vspace{-\abovedisplayskip}
\small
\begin{subequations}
\begin{align}
    a_{\gamma,0} =&
    \frac{\Phi \, \Gamma(2 \, m_{\text{<}})^N  }
    { 2 \, \Gamma(2 N m_{\text{<}})  \bar{\gamma}^{\frac{b_{0} +1 }{2}}} \sum_{i_{1},\hdots,i_{N}=0}^{\infty} \frac{\Gamma\left( m_{\text{>}} + \sum_{l=1}^{N} i_l \right)}{\left(2 + \sum_{l=1}^{N} \frac{\lambda_{r,l}^2 \sigma_{r,l}^2}{\delta_{r,l}^2} \right)^{\sum_{l=1}^{N} i_l}} \nonumber \\
    &\times \prod_{n=1}^{N} \frac{\Gamma(i_n + |m_{\text{--}}|) } {\Gamma(i_n + m_{\text{>}}) i_n!} \left( \frac{\sigma_{r,n} \lambda_{r,n}}{\delta_{r,n}} \right)^{2 i_n}  
    \label{eq:a0SNR} \\
    b_{\gamma,0} =& N m_{\text{<}}-1.
    \label{eq:b0SNR} 
\end{align}
\label{eq:asymTermsSNR}
\end{subequations}
\normalsize  
It is worth noting that \eqref{eq:PDFSNR}, \eqref{eq:CDFSNR}, and \eqref{eq:asymTermsSNR} form the analytical foundation for evaluating system performance in the high-SNR regime.
In particular, from \eqref{eq:asymTermsSNR}, it is evident that $a_{\gamma,0}$ aggregates contributions from all marginal asymptotic terms, thereby encapsulating the channel correlation parameters ($\lambda_n$) of each individual channel.
In contrast, the angular coefficient $b_{\gamma,0}$ remains unaffected by channel correlation.
Importantly, the $N$-fold series in \eqref{eq:a0SNR} cannot be asymptotically reduced, i.e., retaining only the first, supposedly dominant, terms of each series fail to capture the true asymptotic behavior.
This is because the higher-order terms become increasingly significant as the correlation coefficients increase. Therefore, all terms in the $N$-fold series must be included.
As will be shown in \ref{sec:NumericalResults}, correlation---reflected through $a_{\gamma,0}$---primarily induces a horizontal shift in the asymptotic behavior of the PDF or CDF of $\gamma$, shifting it to the left. Notably, the slope remains unchanged at $N m_{\text{<}}-1$, consistent with the case of independent fading. 
Thus, $a_{\gamma,0}$ quantifies the horizontal shift in the asymptotic behavior caused by the presence of correlated fading channels.
The behavior and performance implications of $a_{\gamma,0}$ will be further explored in Section~\ref{sec:NumericalResults}.


Taking the limit as  $\lambda_{r,n} \to 0$, the linear coefficient for the independent case can be obtained as
\par\nobreak\vspace{-\abovedisplayskip}
\small
\begin{align}
    a_{\gamma_{\text{ind}},0} &= 
    \frac{2^{N-1}\left(\frac{\Gamma\left( |m_{-}| \right) \, \Gamma\left( 2 \, m_{<} \right)}
    {\Gamma(m_1) \, \Gamma(m_2)}\right)^{N}}{\Gamma \left( 2 N m_{<} \right) \bar{\gamma}^{\frac{b_{0} +1 }{2}}} \prod_{n=1}^{N} \left( \frac{1}{ \sigma_{1,n}^2 \, \sigma_{2,n}^2} \right)^{m_{<}}. 
    \label{eq:a0SNR_Independent}
\end{align}
\normalsize
The angular coefficients for the correlated and independent cases remain the same, i.e., $ b_{\gamma_{\text{ind}},0}=b_{\gamma,0} = N m_{\text{<}}-1$.

\subsubsection{Truncation Error}
\label{sec:TruncationError}

Note that \eqref{eq:a0} involves $N$ nested series. 
In practice, however, the computation of the linear coefficient in~\eqref{eq:a0} requires truncating the infinite series to a finite number of terms, denoted here by $\tau$.
This truncation process gives rise to an approximation error, known as truncation error, which from~\eqref{eq:a0} can be obtained as
\par\nobreak\vspace{-\abovedisplayskip}
\small
\begin{align}
    \epsilon_{a_0} (\tau) &=
    \frac{\Phi \, \Gamma(2 \, m_{\text{<}})^N  }
    { \Gamma(2 N m_{\text{<}})} \sum_{i_{1},\dots,i_{N}=\tau}^{\infty} \frac{\Gamma\left( m_{\text{>}} + \sum_{l=1}^{N} i_l \right)}{\left(2 + \sum_{l=1}^{N} \frac{\lambda_{r,l}^2 \sigma_{r,l}^2}{\delta_{r,l}^2} \right)^{\sum_{l=1}^{N} i_l}} \nonumber \\
    & \times   \prod_{n=1}^{N} \frac{\Gamma(i_n + |m_{\text{--}}|)  } {\Gamma(i_n + m_{\text{>}}) i_n!} \left( \frac{\sigma_{r,n} \lambda_{r,n}}{\delta_{r,n}} \right)^{2 i_n}. 
    \label{eq: truncation error def} 
\end{align}
\normalsize

Since the $N$-fold series in \eqref{eq: truncation error def} lacks a closed-form solution, deriving an upper bound for $\epsilon_{a_0} (\tau) $ offers a practical and effective alternative. To approach this, we focus on the term $\mathcal{I} \triangleq \Gamma(m_{\text{>}} + \sum_{l=1}^{N} i_l)$. 
The key idea is to establish an upper bound for $\mathcal{I}$ that decouples the composite argument $m_{\text{>}} + \sum_{l=1}^{N} i_l$ from the gamma function, thereby allowing the 
$N$-fold series in \eqref{eq: truncation error def} to be evaluated in closed form.
Leveraging the log-convexity of the gamma function, we have $\mathcal{I} \leq \prod_{n=1}^{N} \Gamma( m_{\text{>}}+ i_n) K$,
where $K$ is a correction factor that accounts for the gamma function’s submultiplicative behavior over sums. 
Considering the growth behavior of the gamma function and the slack introduced by the exponential function, we set $K=\exp(i_n)$. 
This yields the upper bound $\mathcal{I}  \leq \prod_{n=1}^{N} \Gamma(m_{\text{>}} + i_n) \exp(i_n)$, where the exponential term serves to relax the bound, ensuring that the inequality remains valid for all $i_n$ and $m_{\text{>}}$.
Substituting this bound into \eqref{eq: truncation error def} and after lengthy algebraic manipulations with the aid of~\cite[eq. (07.24.02.0001.01)]{wolframFunctions}, we get
\par\nobreak\vspace{-\abovedisplayskip}
\small
\begin{align}
    \epsilon_{a_0} (\tau) &<
    \frac{\Phi \Gamma(2 \, m_{\text{<}})^{N}}
    { \Gamma(2 N m_{\text{<}})} \prod_{n=1}^{N} \frac{  e^{\tau} \, \Gamma\left( |m_{\text{--}}| + \tau \right)}{\left( 2 + \sum_{l=1}^{N} \frac{\lambda_{r,l}^2 \sigma_{r,l}^2}{\delta_{r,l}^2} \right)^{\tau}} \left( \frac{\sigma_{r,n}\lambda_{r,n}}{\delta_{r,n}} \right)^{2\tau} \nonumber \\ 
    &\times { }_2\tilde{F}_1 \left(1, |m_{\text{--}}| + \tau, 1 + \tau, \frac{e \left( \frac{\sigma_{r,n} \lambda_{r,n}}{\delta_{r,n}} \right)^2 }{2 + \sum_{l=1}^{N} \frac{\lambda_{r,l}^2\sigma_{r,l}^2}{\delta_{r,l}^2}} \right).
    \label{eq:truncationErrorClosedForm}
\end{align}
\normalsize

In practice, the truncation bound in \eqref{eq:truncationErrorClosedForm} plays a pivotal role, as it establishes the sufficient number of terms required in \eqref{eq:a0} to ensure a predefined accuracy (error) threshold. 


\section{Numerical Results}
\label{sec:NumericalResults}


In this section, we validate the proposed analytical formulations via Monte Carlo simulations and investigate the effects of correlated fading.
For the analysis, we adopt the following parameters: $\sigma_{1,n}=\sigma_{2,n}=1$ and  $\lambda \triangleq \lambda_{1,n} = \lambda_{2,n}$ for all $n$, i.e., we assume uniform correlation among the elements of both $\mathbf{H}_1$ and $\mathbf{H}_2$ for illustrative purposes---although the proposed analysis remains applicable to scenarios with arbitrary correlation.
Consequently, the correlation coefficient is $\rho  \triangleq \rho_{n,k} = \lambda^4$ for all  $n \neq k$. 
Finally, we use $\tau=20$ terms in the linear coefficient for all scenarios, since the truncation error in \eqref{eq:truncationErrorClosedForm} decreases rapidly with increasing number of RIS elements $N$, reaching values below $10^{-50}$.

\begin{figure}[t!] 
	\centering
	\includegraphics[scale=0.37]{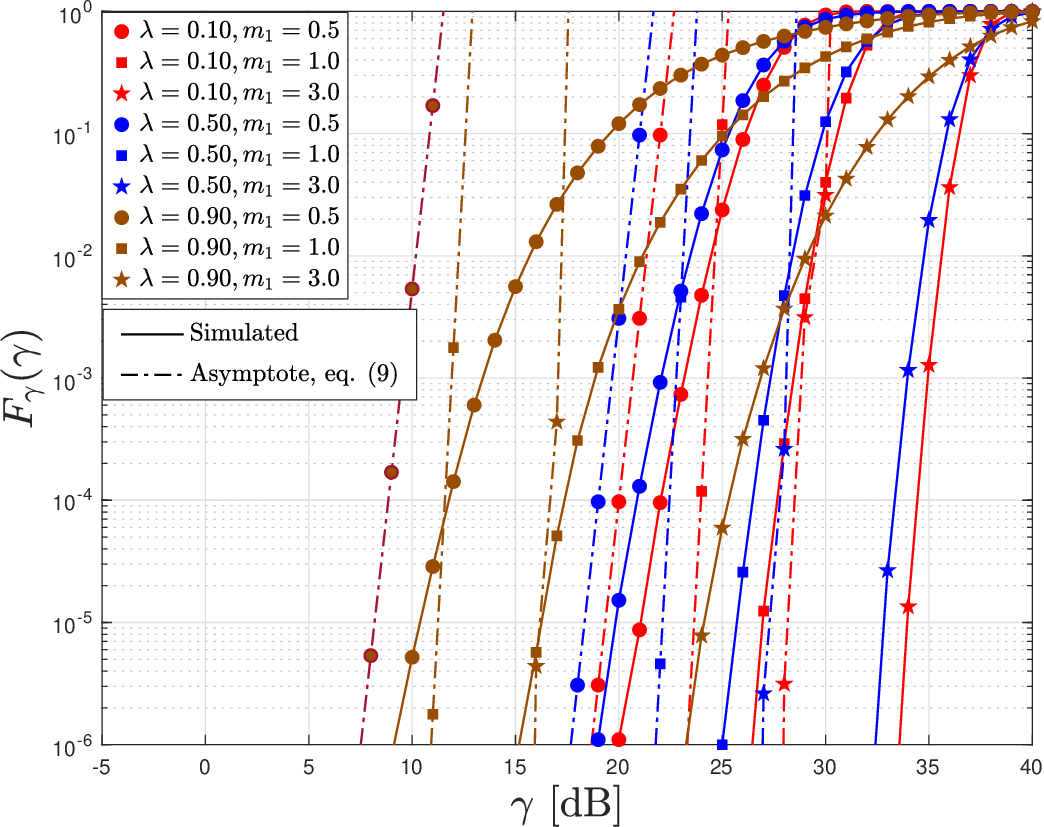}
	\vspace{-0.1cm}
	\caption{Exact and asymptotic CDF of $\gamma$ for $m_2=2$, $N=30$, $\bar{\gamma}=1\text{ dB}$, and different values of $\lambda$ and $m_1$.
    }
	\label{fig:CDFSNR_CorrVariable}
\end{figure}


\begin{figure}[t!]
    \centering
    \begin{subfigure}[t]{0.44\textwidth}
        \centering
        \includegraphics[trim={0.0cm 0.7cm 0cm 0cm},clip,scale=0.34]{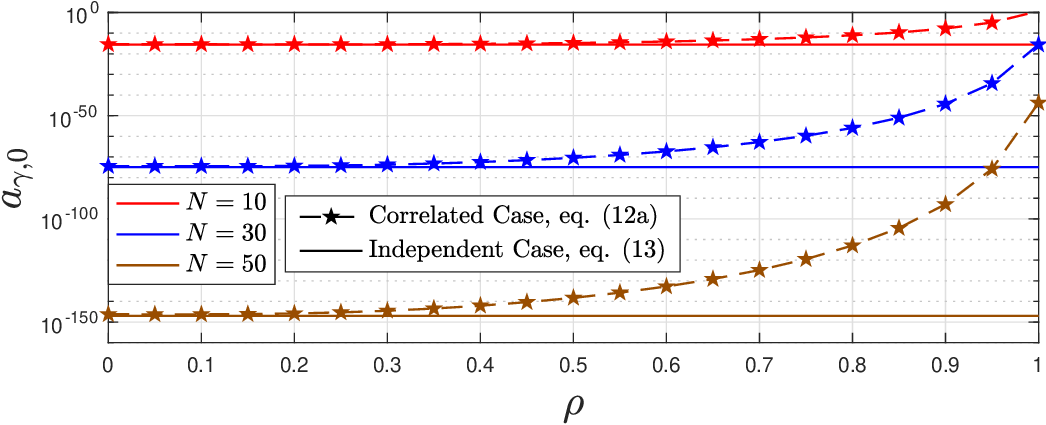}
        \caption{\footnotesize $m_1=2$, $m_2=1$, and different values of $N$.}
        \label{fig:a0SNR_NVariable}
    \end{subfigure}
    \begin{subfigure}[t]{0.44\textwidth}
        \centering
        \includegraphics[scale=0.34]{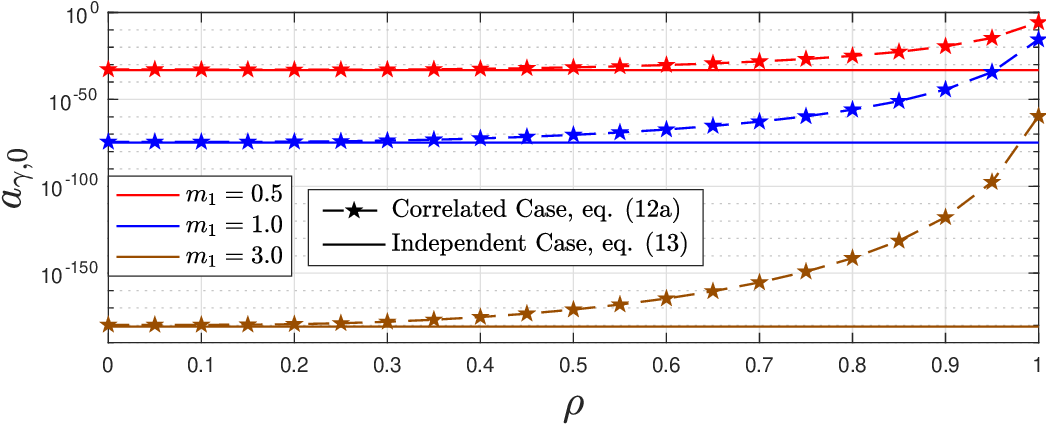}
        \caption{\footnotesize $m_2=2$, $N=30$, and different values of $m_1$.}
        \label{fig:a0SNR_mVariable}
    \end{subfigure}
    \vspace{-0.2cm}
    \caption{Asymptotic linear coefficient $a_{\gamma,0}$ versus correlation coefficient $\rho$ for $\bar{\gamma}=1\text{ dB}$ and different values of $N$ and $m_1$.}
    \label{fig:a0SNR}
\end{figure}


Fig. \ref{fig:CDFSNR_CorrVariable} depicts the exact and asymptotic CDF of $\gamma$ considering $m_2=2$, $N=30$, $\bar{\gamma}=1\text{ dB}$, and different values of $\lambda$ and $m_1$.
Notice that, for a fixed $m_1$, both the exact and asymptotic curves shift progressively to the left as $\lambda$ increases. A similar trend is observed when $\lambda$ is fixed and $m_1$ decreases. This leftward shift indicates that a stronger correlation or more severe fading (i.e., lower $m_1$) reduces the coding gain at a given average SNR per symbol, degrading system performance.
The worst-case scenario occurs when $m_1$ reaches its minimum value and $\lambda$ its maximum, reflecting the combined impact of deep fading and strong correlation.

Fig. \ref{fig:a0SNR} illustrates the linear coefficient $a_{\gamma,0}$ (for both independent and correlated cases) as a function of the correlation coefficient $\rho$ assuming $\bar{\gamma}=1\text{ dB}$ and different values of $N$ and $m_1$.
As shown in Fig.~\ref{fig:a0SNR}-(b), increasing the number of RIS elements $N$ or the fading shape parameter $m_1$ (larger values of $m_1$ correspond to less severe fading conditions) leads to a reduction in $a_{\gamma,0}$.
This trend is consistent in both the independent and correlated cases. Intuitively, this corresponds to a rightward shift of the exact and asymptotic CDF curves, resulting in higher coding gain at a given average SNR per
symbol, thus improving system performance.
Also, from Figs.~\ref{fig:a0SNR}-(a) and \ref{fig:a0SNR}-(b), it is observed that for the correlated case, $a_{\gamma,0}$ remains nearly constant for values of $\rho \lesssim 0.5$, aligning with the value observed in the independent case---as expected. However, for $\rho \gtrsim 0.5$, $a_{\gamma,0}$ rapidly increases and reaches its maximum at $\rho = 1$.
At first glance, the extremely small variations in $a_{\gamma,0}$, on the order of $10^{-100}$, might seem negligible for system performance. However, this is not the case, as it is the combined effect of both $a_{\gamma,0}$ and $b_{\gamma,0}$ that determines the system's coding gain~\cite{Parente2019}. Consequently, even slight changes in $a_{\gamma,0}$ can lead to significant improvements or degradations in overall system performance.

\section{Conclusions}

This study analyzed the SNR statistics of RIS-assisted wireless communication systems operating over arbitrarily correlated Nakagami-$m$ fading channels. The asymptotic analysis demonstrated that spatial correlation primarily influences the linear coefficient, while leaving the angular coefficient unchanged. It also quantified how stronger correlation and more severe fading degrade system performance.



\section*{Appendix \\Proof of the Proposition}



Let $\mathbf{Z}\triangleq [Z_1,\hdots,Z_N]^T= \mathbf{H}_1 \circ \mathbf{H}_2$.
This product can be obtained via the $N$-dimensional Mellin convolution as
\par\nobreak\vspace{-\abovedisplayskip}
\small
\begin{align}
    \label{eq: N fold integral}
    f_{\mathbf{Z}} (\mathbf{z}) =  \int_{\mathbb{R}^N_+}  f_{\mathbf{H}_1} \left(\mathbf{h}_1 \right) f_{\mathbf{H}_2} \left(\mathbf{z} \oslash \mathbf{h}_1\right) \frac{1}{\Pi (\mathbf{h}_1) }\text{d} \mathbf{h}_1,
\end{align}
\normalsize
where $\mathbf{z}= [z_1,\hdots,z_N]^T$, $\mathbf{h}_1= [h_{1,1},\hdots,h_{1,N}]^T$, and $\mathbf{h}_2= [h_{2,1},\hdots,h_{2,N}]^T$. 
Due to the interdependence among the elements of $\mathbf{H}_1$ and $\mathbf{H}_2$, the elements of $\mathbf{Z}$ are inherently~correlated.
Since the joint PDFs of $\mathbf{H}_1$ and $\mathbf{H}_2$, both given by \eqref{eq:JointCorreNakaPDF}, are expressed in terms of a single integral involving products of modified Bessel functions of the first kind, the $N$-fold integral in \eqref{eq: N fold integral} cannot be evaluated in closed form.
To address this, we substitute the modified Bessel functions of the first kind with their series representations~\cite[eq. (03.02.02.0001.01)]{wolframFunctions}.
Then, by interchanging the order of summation (from the Bessel series) and integration, the resulting $N$-fold integral becomes tractable and can be computed term by term.
This yields a new expression for the PDF of $\boldsymbol{Z}$, now given in terms of modified Bessel functions of the second kind, as follows:
\par\nobreak\vspace{-\abovedisplayskip}
\small
\begin{align}
f_{\boldsymbol{Z}}(\boldsymbol{z}) & = \frac{\left( 1 + \sum_{l=1}^N \frac{\lambda_{1,l}^2 \sigma_{1,l}^2}{2 \, \delta_{1,l}^2} \right)^{-m_1} 
\left( 1 + \sum_{l=1}^N \frac{\lambda_{2,l}^2 \sigma_{2,l}^2}{2 \, \delta_{2,l}^2} \right)^{-m_2}}{\Gamma (m_1) \Gamma (m_2)} \nonumber \\
\times & \left[ \prod_{n=1}^N \left(\frac{1}{2 \, \delta_{1,n}^2}\right)^{m_1-1} 
\left(\frac{1}{2 \, \delta_{2,n}^2}\right)^{m_2-1} 
\delta_{1,n}^{m_{\text{--}} - 2}  \delta_{2,n}^{-m_{\text{--}} - 2} \right] \nonumber \\
\times & \left[\sum_{i_{1},\dots,i_{N}=0}^{\infty} \sum_{j_{1},\dots,j_{N}=0}^{\infty}
\Gamma \left( m_1 + \sum_{l=1}^N i_l \right) \Gamma \left( m_2 + \sum_{l=1}^N j_l \right) \right. \nonumber \\
\times & \prod_{n=1}^N \frac{
    z_n^{m_{\text{+}} + i_n + j_n - 1} 
    K_{m_{\text{--}} + i_n - j_n} \left( \frac{z_n}{\delta_{1,n}  \delta_{2,n}} \right) 
}{  \Gamma \left( m_1 + i_n \right) \Gamma \left( m_2 + j_n \right)   i_n! j_n! } \nonumber \\
\times & \delta_{1,n}^{i_n - j_n} 
    \left( \frac{\lambda_{1,n} \sigma_{1,n}}{2 \, \delta_{1,n}^2} \right)^{2 i_n} 
    \left( \sum_{n=1}^N \frac{\lambda_{1,n}^2 \sigma_{1,n}^2}{2 \, \delta_{1,n}^2} + 1 \right)^{-i_n} \nonumber \\
\times  & \left. \delta_{2,n}^{i_n - j_n} 
    \left( \frac{\lambda_{2,n} \sigma_{2,n}}{2 \, \delta_{2,n}^2} \right)^{2 j_n} 
    \left( \sum_{n=1}^N \frac{\lambda_{2,n}^2 \sigma_{2,n}^2}{2 \,  \delta_{2,n}^2} + 1 \right)^{-j_n} \right].
    \label{eq:PDFProduct}
\end{align}
\normalsize

We now conduct an asymptotic analysis of the PDF of $\boldsymbol{Z}$ near the origin, i.e., in the region where $f_{\boldsymbol{Z}}(\boldsymbol{z}) \to \mathbf{0}$, with $\mathbf{0}$ denoting the $N$-dimensional null vector.
To this end, we first exploit the symmetry property of the modified Bessel function of the second kind, namely, $K_\nu(\cdot) = K_{-\nu}(\cdot)$\cite[eq. (03.04.04.0002.01)]{wolframFunctions}.
Next, we apply the series expansion of the modified Bessel function of the second kind\cite[eq. (6)]{Molu2017}.
Finally, by retaining only the dominant (first) terms in the resulting Bessel series expansion, we obtain the following simplified asymptotic expression:
\par\nobreak\vspace{-\abovedisplayskip}
\small
\begin{align}
    f_{\boldsymbol{Z}}(\boldsymbol{z}) \sim \prod_{n=1}^{N} a_{n,0} z_{n}^{b_{n,0}},
    \label{eq:asympProduct}
\end{align}
\normalsize
where 
\par\nobreak\vspace{-\abovedisplayskip}
\small
\begin{subequations}
\begin{align}
    a_{n,0} = & \Phi^{\frac{1}{N}} \sum_{i_{1},\dots,i_{N}=0}^{\infty} \frac{\Gamma\left( m_{\text{>}} + \sum_{l=1}^{N} i_l \right)^{\frac{1}{N}}}{\left(2 + \sum_{l=1}^{N} \frac{\lambda_{r,l}^2 \sigma_{r,l}^2}{\delta_{r,l}^2} \right)^{\frac{1}{N}\sum_{l=1}^{N} i_l}} \nonumber \\
    & \times \prod_{n=1}^{N} \frac{\Gamma(i_n + |m_{\text{--}}|)^{\frac{1}{N}} } {\left(\Gamma(i_n + m_{\text{>}}) \, i_n!\right)^{\frac{1}{N}}} \left( \frac{\sigma_{r,n} \lambda_{r,n}}{\delta_{r,n}} \right)^{\frac{2 i_n}{N}}
    \label{eq:an0} \\
    b_{n,0} = & 2 \, m_{\text{<}} - 1.
    \label{eq:bn0}
\end{align}
\label{eq: abn0}
\end{subequations}
\normalsize
It can be noticed that \eqref{eq:asympProduct} shares the same functional form as \cite[eq. (5)]{Parente2019}. This structural similarity allows us to infer that the elements of $\mathbf{Z}$ (i.e., $Z_1,\hdots,Z_N$) exhibit asymptotic independence, i.e., in the vicinity of the origin, the joint asymptotic PDF of $\mathbf{Z}$ factorizes into the product of its marginal asymptotic PDFs. With this insight, we are now in a position to derive the asymptotic representation of the sum in \eqref{eq:RisModel}. Specifically, by substituting $a_{n,0}$ and $b_{n,0}$ from \eqref{eq: abn0} into \cite[eq. (4)]{Parente2019}, we obtain the linear and angular coefficients of the PDF of  $S$, as presented in \eqref{eq:asymTerms}. This completes the proof.

\vspace{-1.4cm}

\vspace{1.5cm}
\justifying
\bibliographystyle{IEEEtran}
\bibliography{IEEEabrv,bibliografia}

\end{document}